\newtheorem{theorem}{Theorem}
\newtheorem{definition}{Definition}%
\begin{document}

\title[Quantum search by continuous-time quantum walk on $t$-designs]{Quantum search by continuous-time quantum walk on $t$-designs}



\author*[1,2]{\fnm{Pedro H. G.} \sur{Lugão} \orcidlink{0000-0002-2316-445X}}\email{pedrogasparetto@ymail.com}\equalcont{These authors contributed equally to this work.}

\author[1]{\fnm{Renato} \sur{Portugal}\orcidlink{0000-0003-0894-4279}}\email{portugal@lncc.br}
\equalcont{These authors contributed equally to this work.}

\affil*[1]{\orgname{National Laboratory of Scientific Computing - LNCC}, \orgaddress{\street{Av. Getúlio Vargas}, \city{Petrópolis}, \postcode{25651-075}, \state{RJ}, \country{Brazil}}}

\affil[2]{\orgname{Universidade Federal de Juiz de Fora - UFJF}, \orgaddress{\street{Rua José Lourenço Kelmer}, \city{Juiz de Fora}, \postcode{36036-900}, \state{MG}, \country{Brazil}}}


\abstract{This work examines the time complexity of quantum search algorithms on combinatorial $t$-designs with multiple marked elements using the continuous-time quantum walk. Through a detailed exploration of $t$-designs and their incidence matrices, we identify a subset of bipartite graphs that are conducive to success compared to random-walk-based search algorithms. These graphs have adjacency matrices with eigenvalues and eigenvectors that can be determined algebraically and are also suitable for analysis in the multiple-marked vertex scenario. We show that the continuous-time quantum walk on certain symmetric $t$-designs achieves an optimal running time of $O(\sqrt{n})$, where $n$ is the number of points and blocks, even when accounting for an arbitrary number of marked elements. Upon examining two primary configurations of marked elements distributions, we observe that the success probability is consistently $o(1)$, but it approaches 1 asymptotically in certain scenarios.}

\keywords{Quantum walks, Quantum search, $t$-designs, Bipartite graphs}



\maketitle

\section{Introduction}

Continuous-time quantum walk, a quantum analog of the continuous-time Markov chain, is a dynamic paradigm that is crucial to quantum computation and information processing~\cite{FG98}. They are powerful computational procedures that encompass the non-classical properties of quantum systems, enabling them to explore superposition and entanglement with a straightforward formalism. Continuous-time quantum walk and the alternative versions in discrete-time are invaluable tools in constructing quantum search algorithms, where they can perform searches over graphs faster than their classical counterparts~\cite{CG04}. This speedup underlies their advantage in finding solutions to various optimization problems~\cite{MW20}. Furthermore, quantum walks offer a robust mechanism for simulating complex quantum physical systems, enabling the understanding and prediction of the behavior of quantum particles in various environments~\cite{KGK21}.

The fundamental concept behind a quantum walk-based search algorithm involves initializing a quantum walker at a state that is easily prepared, and then allowing it to evolve according to a standard quantum walk operator, which is modified by an oracle that identifies the locations of the marked vertices. When the quantum walk operator is aptly designed, the quantum walker will not only reach the marked vertex eventually but also display a significant probability of being found at one of the marked vertices. Indeed, the efficiency of a quantum walk-based search algorithm is determined by both the runtime and the success probability~\cite{Por18book}.

Given the current progress in quantum computing, it becomes compelling to explore quantum search algorithms on classes of graphs, as opposed to individual graphs, and to determine the time complexity of these algorithms relative to the number of vertices. Historically, the class of complete graphs was the first to undergo analysis. In this context, quantum-walk-based search mirrors the function of Grover's algorithm~\cite{Gro97}. Quantum search by continuous-time quantum walk on certain specific graph classes is analyzed in~\cite{CG04,PTB16,TSP22}.
The fascinating interplay between quantum computation and combinatorial graph theory persistently unveils geometrical structures that hold potential for devising new quantum algorithms.

One structure in combinatorial mathematics is the block design~\cite{Sti04,BJL99}. It refers to an incidence structure comprising a set of points and a selection of subset families, termed blocks. The points within these blocks are chosen to meet specific frequency conditions. This meticulous selection ensures that the entire collection of blocks exhibits a sense of symmetry, often termed ``balance''.
In the absence of additional context, the term ``block design'' is typically interpreted as a balanced incomplete block design or, equivalently, a 2-design. Historically, this specific type has been extensively researched due to its pivotal role in experimental design, coding theory, cryptography, and software testing~\cite{Rot76}. Such applications may also be influential in the quantum realm, making it intriguing to examine the relationship between $2$-designs and quantum algorithms from a theoretical standpoint. A more generalized version of this concept is the $t$-design, which is the structure of our interest.

While our primary focus in this paper lies on combinatorial $t$-designs, it's worth noting the intriguing extension of these designs into the realm of quantum mechanics, known as quantum $t$-designs~\cite{DCEL09}.  In the context of quantum information theory, a set of quantum states forms a quantum $t$-design if the average of certain quantum operations, specifically those expressible as polynomials of degree $t$ or less in the entries of a density matrix, over this set mirrors that of the entire space of quantum states (according to the Haar measure). Essentially, quantum $t$-designs offer a compact and representative subset of quantum states, capturing the statistical properties of random quantum states up to the $t$-th moment. These designs have found applications in quantum algorithms~\cite{AE07} and quantum error correction~\cite{EAZ05}.

Turning our attention back to the concept of combinatorial $t$-designs, the incidence matrix of a $t$-design enumerates the repetitions of each point within every block. As the incidence matrix constitutes a bipartite graph~\cite{GR01}, a continuous-time quantum walk on a $t$-design is synonymous with a walk on its corresponding bipartite graph. Apart from the complete bipartite graph, there are limited results in existing literature that establish the time complexity of search algorithms on bipartite graphs. To our best understanding, none of these address scenarios involving multiple marked cases in the continuous-time model. Indeed, quantum-walk-based search algorithms have a rich history in the single-marked case, beginning with the foundational paper by Shenvi \textit{et al.}~\cite{SKW03} and Childs\&Goldstone~\cite{CG04} on searching for a single marked vertex on hypercubes and other graphs. However, the focus has recently shifted to the multiple-marked case~\cite{Won16,BLP21,LPST23,ACNR22,RKM23}.

The primary goal of this work is to determine the time complexity of quantum search algorithms on $t$-designs with multiple marked elements (points or blocks) using the continuous-time quantum walk. Such effort helps in discerning the time complexity of quantum search on bipartite graphs, a notably challenging problem. Our focus lies in pinpointing specific subsets of bipartite graphs that are most conducive to producing successful results. Ideally, these graphs would have adjacency matrices with determinable eigenvalues and eigenvectors in algebraic terms, underlining their suitability for evaluating the time complexity in the multiple-marked vertex context. We use $t$-designs to identify such a significant subset of bipartite graphs. Our findings indicate that the optimal running time of a continuous-time quantum walk on certain symmetric $t$-designs is $O(\sqrt{n})$, where $n$ stands for the number of points and blocks, even when accounting for an arbitrary number of marked elements. We assess two configurations: in the first, all marked elements are situated in one part of the bipartite graph, meaning that all of them are points or all of them are blocks; in the second, the marked elements are evenly distributed between the two parts. The probability of success remains consistently $o(1)$ across both setups, but it approaches 1 in particular scenarios.

This paper is organized as follows. In Section~\ref{sec:spec}, we derive the spectral decomposition of symmetric $t$-designs. In Section~\ref{sec:search}, we first determine the time complexity of a quantum search using a continuous-time quantum walk on symmetric $t$-designs, beginning with the single-marked case. We then address the two-marked case, and subsequently, build up to the multiple-marked scenario. We conclude our discussion in Section~\ref{sec:conc}, where we present our final thoughts and conclusions.

\section{Spectral decomposition of symmetric \textit{t}-designs}\label{sec:spec}

Let $X$ be a finite set of $v$ elements called points. Let $k$ and $\lambda$ be positive integers. 
The definition of a $t$-design, which includes the balanced incomplete block design (2-design) is as follows.

\begin{definition}
[\textbf{$t$-design}]
Given a set $X$ of $v<\infty$ elements (called points) and any positive integer $t$, assuming that $t\le k\le v$, a $t$-$(v,k,\lambda)$-design is a class of $k$-subsets of $X$, called blocks, such that every point $x$ in $X$ appears in exactly $r$ blocks, and every $t$-subset appears in exactly $\lambda$ blocks. 
\end{definition}

Let us define
$$\lambda_{i}=\lambda {\binom {v-i}{t-i}}{\binom {k-i}{t-i}}^{-1}$$
for $i=0,1,\ldots ,t$. Note that  $\lambda_t=\lambda$ and $\lambda_i$ represents the number of blocks that contain any $i$-set of points.

The number of blocks $b$ in a $t$-design is given by
\[ 
b=\lambda_0=\lambda {v \choose t}{k \choose t}^{-1}.
\]
Number $r$ is given by
\[
r=\lambda_1=\lambda {v-1 \choose t-1}{k-1 \choose t-1}^{-1}.
\]
Any $t$-$(v,k,\lambda)$-design is also an $s$-$(v,k,\lambda)$-design for any $s$ satisfying $1 \le s \le t$~\cite{Sti04}. It should be observed that the $\lambda$ value changes as described above and is dependent on $s$. It follows that every $t$-design with $t \ge 2$ is also a 2-design. Besides, the number of blocks $\lambda_i$ that contain any $i$-subset of $X$ in a $t$-design is independent of the choice of the subset for $i=1,\ldots,t$.  The term block design by itself usually means a 2-design. 
Given numbers $t$, $v$, $k$, and $\lambda$, it is no easy task to find examples of $t$-$(v,k,\lambda)$-designs.

Let $\mathcal{D}$ be a $t$-design with parameters $(v,b,r,k,\lambda_t)$ and $G$ its incidence graph. By definition, $G$ is a $(r,k)$-biregular bipartite graph~\cite{GR01}. The adjacency matrix of $G$ is
\begin{equation}
    A = \begin{pmatrix}
        0 & N\\
        N^T & 0
    \end{pmatrix},
\end{equation}
where $N$ is a $v\times b$ incidence matrix.

\begin{theorem}
Let $\mathcal{D}$ be a symmetric $t$-design (i.e., $v=b$ and $r=k$). Then, the eigenvalues of the adjacency matrix of $\mathcal{D}$ are $\{\pm k,\pm \sqrt{r-\lambda_2}\}$ and the spectral idempotents are
\renewcommand*{\arraystretch}{1.5}
\begin{align}
    E_{-k} &= \frac{1}{2v}\begin{pmatrix}
        J & -J\\
        -J & J
    \end{pmatrix},\\
    E_{k} &= \frac{1}{2v}\begin{pmatrix}
        J & J\\
        J & J
    \end{pmatrix},\\
    E_{-\sqrt{r-\lambda_2}} &= \frac{1}{2}\begin{pmatrix}
        \mathrm{I} - \frac{1}{v}J & -M\\
        -M^T & \mathrm{I} - \frac{1}{v}J
    \end{pmatrix},\\
    E_{\sqrt{r-\lambda_2}} &= \frac{1}{2}\begin{pmatrix}
        \mathrm{I} - \frac{1}{v}J & M\\
        M^T & \mathrm{I} - \frac{1}{v}J
    \end{pmatrix},
\end{align}
where $M  = \frac{1}{v\sqrt{r-\lambda_2}} \left(-kJ + vN\right)$ and $J$ is a matrix of ones.
\end{theorem}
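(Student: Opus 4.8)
The plan is to verify the claimed spectral decomposition directly, exploiting the block structure of $A$ together with the standard incidence identities for symmetric designs. First I would record the two facts about the incidence matrix $N$ that drive everything. Because every $t$-design with $t\ge 2$ is a $2$-design, counting the blocks through one point and through a pair of points gives $N N^T = (r-\lambda_2)\mathrm{I} + \lambda_2 J$, while biregularity gives $NJ = JN = kJ$ (using $r=k$). The less elementary ingredient is the ``dual'' identity $N^T N = (r-\lambda_2)\mathrm{I}+\lambda_2 J$, asserting that any two distinct blocks meet in exactly $\lambda_2$ points. I would obtain it self-containedly: from $\det(NN^T) = (r-\lambda_2)^{v-1}k^2 \neq 0$ the square matrix $N$ is invertible, $NJ=kJ$ forces $N^{-1}J = J/k$, and hence $N^T N = N^{-1}(NN^T)N = (r-\lambda_2)\mathrm{I}+\lambda_2 N^{-1}JN = (r-\lambda_2)\mathrm{I}+\lambda_2 J$.

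With these in hand the eigenvalues follow quickly. Since $A^2 = \mathrm{diag}(NN^T,\,N^TN)$ is block diagonal with both blocks equal to $(r-\lambda_2)\mathrm{I}+\lambda_2 J$, and the eigenvalues of $(r-\lambda_2)\mathrm{I}+\lambda_2 J$ are $(r-\lambda_2)+\lambda_2 v$ on the all-ones vector and $r-\lambda_2$ on its orthogonal complement, the parameter relation $k(k-1)=\lambda_2(v-1)$ collapses the first value to $k^2$. Thus $\operatorname{Spec}(A^2)=\{k^2,\,r-\lambda_2\}$, and since the spectrum of a bipartite graph is symmetric about $0$, the eigenvalues of $A$ are exactly $\{\pm k,\pm\sqrt{r-\lambda_2}\}$.

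Then I would reduce the whole idempotent verification to a short list of identities for $M = \frac{1}{v\sqrt{r-\lambda_2}}(-kJ+vN)$ and $P:=\mathrm{I}-\frac{1}{v}J$. A direct computation using the facts above yields $MJ=JM=0$, $MM^T=M^TM=P$, together with $P^2=P$, $PM=MP=M$, and $JP=PJ=0$; the crucial cancellation $MM^T=P$ again invokes $k(k-1)=\lambda_2(v-1)$ to turn the coefficient of $J$ into $-1/v$. Feeding these into the $2\times2$ block arithmetic makes each required property routine: each $E_\mu$ is symmetric and satisfies $E_\mu^2=E_\mu$; the cross terms vanish (for instance $E_k E_{\sqrt{r-\lambda_2}}=0$ because $JP=0$ and $JM=JM^T=0$); the four idempotents sum to $\mathrm{I}$ since $\frac1v J+P=\mathrm{I}$; and recombining $-kE_{-k}+kE_k-\sqrt{r-\lambda_2}\,E_{-\sqrt{r-\lambda_2}}+\sqrt{r-\lambda_2}\,E_{\sqrt{r-\lambda_2}}$ gives a matrix whose off-diagonal block is $\frac{k}{v}J+\sqrt{r-\lambda_2}\,M = N$, recovering $A$. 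By uniqueness of the spectral decomposition of the symmetric matrix $A$, these are the spectral idempotents.

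The main obstacle is the design-theoretic input rather than the linear algebra: establishing $N^T N=(r-\lambda_2)\mathrm{I}+\lambda_2 J$. Everything downstream is forced once $MM^T=M^TM=\mathrm{I}-\frac1v J$ and $MJ=JM=0$ are secured, and the only place a genuine computation (as opposed to block bookkeeping) enters is verifying those, where the two-design parameter relation must be used to clear the $J$-terms. I would track carefully which steps require $v=b$ and $r=k$, since symmetry is precisely what makes $N$ square and invertible, and hence makes both the dual identity and the clean four-eigenvalue spectrum available.
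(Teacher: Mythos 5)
Your proposal is correct and takes essentially the same route as the paper's proof: verify that the four claimed matrices are pairwise orthogonal idempotents summing to the identity and that their weighted combination reconstructs $A$, using the same working identities ($J^2=vJ$, $JN=NJ=kJ$, $N^TN=NN^T$, $MM^T=M^TM=\mathrm{I}-\frac{1}{v}J$, $JM=MJ=0$). The only difference is one of thoroughness: you additionally prove the dual identity $N^TN=(r-\lambda_2)\mathrm{I}+\lambda_2 J$ (via invertibility of $N$, valid since $k<v$) and derive the eigenvalue set independently from $A^2$ and bipartite symmetry, whereas the paper simply asserts these as known facts.
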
  

\begin{proof}  To proof this theorem, it suffices to show that \{$E_k,E_{-k},E_{\sqrt{r-\lambda_2}} ,E_{-\sqrt{r-\lambda_2}}$\} is a set of orthogonal projectors obeying the completeness relation $\sum_i E_i = \mathrm{I}$ and that the spectral decomposition of $A$ is
    \begin{equation}\label{eq:decomp}
        A = k E_k + (-k) E_{-k} + \sqrt{r-\lambda_2} E_{\sqrt{r-\lambda_2}} + (-\sqrt{r-\lambda_2}) E_{-\sqrt{r-\lambda_2}}.
    \end{equation}
The following equations are valid identities:
     \begin{enumerate}
         \item  $J^2 = vJ$ \label{id1}
         \item  $(I - J/v)^2 = (I-J/v)$\label{id2}
         \item $JN=NJ = N^TJ = JN^T = kJ$\label{id3}
         \item $JM = MJ = M^TJ = JM^T = 0$\label{id4}
         \item $N^TN = NN^T = (r-\lambda_2)(I-J/v) + k^2J/v$\label{id5}
         \item $MM^T = M^TM = I-J/v$\label{id6}
         \item $J(I-J/v) = 0$\label{id7}
     \end{enumerate}
Using Identities~\eqref{id1}-\eqref{id6}, we show that $E_{\pm k}^2=E_{\pm k}$ and $E_{\pm\sqrt{r-\lambda_2}}^2=E_{\pm\sqrt{r-\lambda_2}}$.
Using additionally Identity~\eqref{id7}, we show that these projectors are pairwise orthogonal.
Summing the matrices we can see that $\sum_i E_i = \mathrm{I}$.
Finally using the definition of $M$ we check that Equation~\eqref{eq:decomp} is valid, which concludes the demonstration.
\end{proof}

\section{Quantum search}\label{sec:search}

Quantum search by continuous-time quantum walk on a graph with adjacency matrix $A$ is driven by a Hamiltonian $H$, whose expression is
\begin{equation*}
H = -\gamma A - \sum_{w\in W}\ket{w}\bra{w},
\end{equation*}
where $\gamma$ is a positive parameter and $W$ represents the set of marked vertices~\cite{CG04}.  The evolution operator is given by
$$U(t)=\text{e}^{-\text{i}Ht},$$
and the state of the walk at time $t$ is $\ket{\psi(t)}=U(t)\ket{\psi(0)}$, where $\ket{\psi(0)}$ denotes the initial state. The probability of finding the walker on vertex $w$ at time $t$ is $p_w(t)=\left|\langle w|\psi(t)\rangle\right|^2$. For our initial state, we assume an unbiased uniform superposition. Our goals include (1) determining the optimal running time $t_\text{opt}$ of the search algorithm as a function of the parameters of the $t$-design, and (2)~determining its success probability, which is $\sum_{w\in W} p_w(t_\text{opt})$.

Now, let's assess the time complexity of a quantum search using a continuous-time quantum walk on symmetric $t$-designs. We will begin our analysis with $t$-designs that contain a single marked vertex. Specifically, $W=\{0\}$, where 0 stands for the label of a point or a block.

\subsection{Single marked vertex}
Assuming that the $t$-design is symmetric and contains only one marked vertex, labeled 0, we next consider $\{\phi_l\}_{l=0}^{3}$ as the list of eigenvalues of the $t$-design, ordered from highest to lowest.

Using the method outlined in~\cite{CNR20b,CGTX22,SPP23,LPST23}, we have to calculate the  sums 
\begin{align*}
    S_1 &= \sum_{l=1}^3\frac{\|E_{\phi_l}\ket{0}\|^2}{\phi_0 - \phi_l},\\
    S_2 &=  \sum_{l=1}^3\frac{\|E_{\phi_l}\ket{0}\|^2}{(\phi_0 - \phi_l)^2},
\end{align*}
which are equal to
\begin{equation*}
    S_1 =\frac{v-1}{v}\frac{k}{k^2 - r + \lambda_2} + \frac{1}{4vk}
\end{equation*}
and
\begin{equation*}
    S_2 =\frac{v-1}{v}\frac{(k^2 + r - \lambda_2)}{(k^2 - r + \lambda_2)^2} + \frac{1}{8vk^2}.
\end{equation*}

Now, take into consideration that the evolution operator is $\text{e}^{-iHt}$, where $H = -\gamma A - \ket{0}\bra{0}$, the optimal value for $\gamma$ along with its corresponding optimal hitting time are provided by
\begin{align*}
    \gamma &= S_1,\\
    t_{\text{opt}} &= \frac{\pi}{2\epsilon},
\end{align*}
where $\epsilon = \frac{S_1\|E_{k}\ket{0}\|}{\sqrt{S_2}}$. Using that $\lambda_2 = r(k-1)/(v-1)$ and $k=r$, we obtain
\begin{equation*}
    t_{\text{opt}} = \frac{\pi  \sqrt{k+1}}{\sqrt{2k}} \sqrt{v} + O\left(\frac{1}{\sqrt{v}}\right).
\end{equation*}
Given that the number of vertices, ${n}$, equals $v + b = 2v$, it follows that the optimal time $t_{\text{opt}}$ is on the order of $\sqrt{{n}}$. To complete the time complexity analysis, we must calculate the success probability, which is
\begin{equation*}
    p_\text{succ} = \frac{S_1^2}{2vS_2\|E_{k}\ket{0}\|^2} = \frac{k}{k+1} + O\left(\frac{1}{v}\right).
\end{equation*}

It's interesting to note that, if the bipartite graph is complete, then $k = v$. As a result, the success probability becomes $p_\text{succ} = 1 + O(1/v)$ and the optimal running time is $t_\text{opt}=\pi\sqrt v/\sqrt 2 + O(1/\sqrt v)$. These are the expected results for a complete bipartite graph as shown in~\cite{PM22}. 

While Theorem 1 doesn't apply to complete graphs due to an indeterminacy in $M$ when $k = v$ (since it implies $\lambda_2 = r$), it is indeed a compelling result that the final asymptotic behavior aligns with the theory as $k$ approaches $v$. As we increase the number of marked vertices, we may or may not observe this same correspondence.
\subsection{Multiple marked vertices}

Adhering to the methodology presented in~\cite{LPST23} for multiple marked vertices, where $W$ is the set of marked vertices, our interest remains focused on only two eigenvalues of the Hamiltonian, which are given by
\begin{equation}
    \lambda^\pm = -\gamma \phi_0 \pm \epsilon.\label{eq:lambdaEps}
\end{equation}

The extension of the one-marked case to the multiple-marked case for the computation of these two eigenvalues implies that we must solve the equation
\begin{equation}
\det(\Lambda_\lambda) = 0,\label{eq:det}
\end{equation}
where $\Lambda_\lambda$ is a $|W|\times|W|$ matrix defined as
\begin{equation}
    (\Lambda_\lambda)_{ww'} =  \delta_{ww'} + \sum_{l=0}^3 \frac{\bra{w}E_{\phi_l}\ket{w'}}{\lambda + \gamma \phi_l}.\label{eq:LambdaTerm}
\end{equation}

By combining Equations~\eqref{eq:LambdaTerm} and~\eqref{eq:lambdaEps}, we derive an expression for $\Lambda_{\lambda(\epsilon)}$, denoted as $\Lambda_\epsilon$, accurate up to the second order in $\epsilon$, which is 
\begin{equation}
    (\Lambda_\epsilon)_{ww'} = \pm \frac{\bra{w}E_k\ket{w'}}{\epsilon} + \delta_{ww'} - \frac{S^{(1)}_{ww'}}{\gamma} \mp \frac{\epsilon S^{(2)}_{ww'}}{\gamma^2},
\end{equation}
where
\begin{equation}
    S^{(1)}_{ww'} = \sum_{l=1}^{3} \frac{\bra{w}E_{\phi_l}\ket{w'}}{k-\phi_l}
\end{equation}
and
\begin{equation}
    S^{(2)}_{ww'} = \sum_{l=1}^{3} \frac{\bra{w}E_{\phi_l}\ket{w'}}{(k-\phi_l)^2}.
\end{equation}

Despite this modification, Equation~\eqref{eq:det} remains challenging to compute for a general case. Therefore, we confine our analysis to certain specific cases where the determinant can be more readily managed.

\subsubsection*{Two-marked vertices}

We now proceed to analyze the two-marked case to aid in the computation of the more challenging multiple-marked case. Since we only have two marked vertices, $\Lambda_\lambda$ is a $2\times 2$ matrix. Although the determinant is straightforward to compute, it's important to note that $(\Lambda_\lambda)_{ww'}$ with $w\neq w'$ can have different expressions depending on the location and relation between the marked vertices, as determined by the values of $S^{(1)}_{ww'}$ and $S^{(2)}_{ww'}$. Thus, even with just two marked vertices, we must consider three different cases to exhaust all possibilities using that the bipartite graph has two parts $V$ and $V'$.

\subsubsection*{Case 1: $w$ and $w'$ adjacent}

In the scenario that $w\in V$ and $w'\in V'$ belong to different parts but they are adjacent, we have
\begin{align*}
    S^{(1)}_{ww'}&=\frac{(-k+v)k}{v\sqrt{r-\lambda_2}(k^2 - r + \lambda_2)} - \frac{1}{4vk},\\
    S^{(2)}_{ww'}&=\frac{(-k+v)(k^2 + r - \lambda_2)}{v\sqrt{r-\lambda_2}(k^2 - r + \lambda_2)^2} - \frac{1}{8vk^2}.
\end{align*}
Given that $\Lambda_\lambda$ is a symmetric matrix, we find that
$$\det(\Lambda_\lambda) = \big((\Lambda_\lambda)_{ww} + (\Lambda_\lambda)_{ww'}\big)\big((\Lambda_\lambda)_{ww} - (\Lambda_\lambda)_{ww'}\big).$$
Term $(\Lambda_\lambda)_{ww} + (\Lambda_\lambda)_{ww'}$ is the eigenvalue of $\Lambda_\lambda$ associated with the uniform eigenvector. The correct values for $\lambda^\pm$ are obtained equating this term to zero. 
We select $\epsilon$ such that
\begin{align*}
    0 &= \epsilon \Big((\Lambda_\epsilon)_{ww} + (\Lambda_\epsilon)_{ww'}\Big)\\
    &= a(\gamma) + b(\gamma)\epsilon + c(\gamma)\epsilon^2 + O(\epsilon^3),
\end{align*}
with $\gamma$ chosen in a manner such that $b(\gamma) = 0$, leading us to two symmetrical roots. This approach yields
\begin{equation}
    \gamma=\frac{k\left(k - v + (1-v)\sqrt{k-\lambda_2}\right)}{v\sqrt{k-\lambda_2}\,(k-\lambda_2 - k^2)}
\end{equation}
and
\begin{equation}
    \epsilon^\pm = \pm \frac{\sqrt{k + \sqrt{k}}}{\sqrt{k+1}}\frac{1}{\sqrt{v}} + O\left(\frac{1}{v}\right).
\end{equation}
By applying the formulas for the two-marked case as outlined in~\cite{LPST23}, we obtain
\begin{equation*}
    t_\text{opt} = \frac{\pi\sqrt{k+1}}{2\sqrt{k+\sqrt{k}}}\sqrt{v} + O\left(\frac{1}{v}\right)
\end{equation*}
and
\begin{equation*}
   p_\text{succ} = \frac{4\sqrt{k}(\sqrt{k}+1)^3(k+1)}{(4\sqrt{k} + k + 2k\sqrt{k} + 1)^2} + O\left(\frac{1}{\sqrt{v}}\right).
\end{equation*}

\subsubsection*{Case 2: $w$ and $w'$ are non-adjacent and belong to different parts}

In this scenario, we have
\begin{align*}
    S^{(1)}_{ww'}&=\frac{-k^2}{v\sqrt{r-\lambda_2}(k^2 - r + \lambda_2)} - \frac{1}{4vk},\\
    S^{(2)}_{ww'}&=\frac{-k(k^2 + r - \lambda_2)}{v\sqrt{r-\lambda_2}(k^2 - r + \lambda_2)^2} - \frac{1}{8vk^2}.
\end{align*}
This minor difference in the scenario results in a different $\gamma$ value
\begin{equation}
    \gamma=\frac{k\left(k + (1-v)\sqrt{k-\lambda_2}\right)}{v\sqrt{k-\lambda_2}\,(k-\lambda_2 - k^2)}
\end{equation}
and in a simpler $\epsilon$
\begin{equation}
    \epsilon^\pm = \pm \frac{\sqrt{k}}{\sqrt{k+1}}\frac{1}{\sqrt{v}} + O\left(\frac{1}{v}\right).
\end{equation}
With those new expressions, we have
\begin{equation*}
    t_\text{opt} = \displaystyle \frac{\pi  \sqrt{k+1}}{2 \sqrt{k}}\sqrt{v} + O\left(\frac{1}{\sqrt{v}}\right)
\end{equation*}
and
\begin{equation*}
    p_\text{succ} = \frac{k}{k+1} - O\left(\frac{1}{\sqrt{v}}\right).
\end{equation*}

\subsubsection*{Case 3: $w$ and $w'$ are in the same part}

This is the simplest scenario ($\{w,w'\}\in V$ or $\{w,w'\}\in V'$) , since the sums are
\begin{align*}
    S^{(1)}_{ww'}&=-\frac{k}{v(k^2 - r + \lambda_2)} + \frac{1}{4vk},\\
    S^{(2)}_{ww'}&=-\frac{k^2+r-\lambda_2}{v(k^2 - r + \lambda_2)^2} + \frac{1}{8vk^2}.
\end{align*}
Then, we obtain
\begin{equation}
    \gamma =\displaystyle \frac{2 k^{2} v - 3 k^{2} + \lambda_{2} - r}{2 k v \left(k^{2} + \lambda_{2} - r\right)}
\end{equation}
and
\begin{equation}
    \epsilon^\pm = \pm \frac{\sqrt{k}}{\sqrt{k+1}}\frac{1}{\sqrt{v}} + O\left(\frac{1}{v}\right),
\end{equation}
which is the same first-order expression of Case 2. The expression for the optimal running time and success probability are
\begin{equation*}
    t_\text{opt} = \displaystyle \frac{\pi  \sqrt{k+1}}{2 \sqrt{k}}\sqrt{v} + O\left(\frac{1}{\sqrt{v}}\right)
\end{equation*}
and
\begin{equation*}
    p_\text{succ} = \frac{k}{k+1} - O\left(\frac{1}{\sqrt{v}}\right),
\end{equation*}
which are the same first-order expressions of Case 2.

\subsubsection*{All marked vertices are in the same part}

To generalize this analysis, we restrict our attention to the case where there are $m$ marked vertices in one of the parts of the bipartite graph. This is the only situation where all vertices are related in the same way (as described in the previous subsection), allowing us to fully compute the determinant of the $\Lambda_\lambda$ matrix.

The key concept here is that $(\Lambda_\lambda)_{ww'}$ is the same for every pair $(w,w')$ in the same part ($V$ or $V'$). This fact can be verified by noting that the sums considered in the previous section only depend on the relation between the two vertices, not their position. Consequently, because every pair shares the same relation, we obtain an $m\times m$ $\Lambda_\lambda$ matrix in the following form
\begin{equation*}
    \begin{pmatrix}
        a & b &  \hdots &b\\
        b & a &  \hdots &b\\
        \vdots & \vdots & \ddots & \vdots\\
        b & b & \hdots & a 
    \end{pmatrix},
\end{equation*}
whose determinant is $(a-b)^{m-1}(a+(m-1)b)$. This can be easily checked noting that $(1,1,...,1)$ is a $(a+(m-1)b)$-eigenvector and that $\dim(\ker(\Lambda_\lambda - I(a-b))) = m-1$, giving us an $(a-b)$-eigenspace of dimension $m-1$. 

Let's recall that $a = (\Lambda_\lambda)_{ww}$ and $b = (\Lambda_\lambda)_{ww'}$, with $w\neq w'$. Our goal is to find a root for the determinant, so we can either take $\lambda$ as a root of $(a-b)$ or a root of $(a+(m-1)b)$. Our calculations show that the $\lambda$ which serves as the root of $(a-b)$ does not follow the format of Equation~\eqref{eq:lambdaEps} (as it can be readily resolved by canceling some terms). Therefore, we limit our search to the $\lambda$ described by Equation~\eqref{eq:lambdaEps} that serves as the root of $(a+(m-1)b)$. More specifically, we deal with
\begin{equation*}
\epsilon((\Lambda_\epsilon)_{ww} + (m-1)(\Lambda_\epsilon)_{ww'})) = A(\gamma)\epsilon^2 + B(\gamma)\epsilon + C(\gamma) + O(\epsilon^3).
\end{equation*}
Then, we can proceed in the same way as before, deriving $\gamma$ such that $B(\gamma) = 0$ and $\epsilon^\pm$ from the final quadratic expression
\begin{align*}
    \gamma &= \displaystyle \frac{- 3 k^{2} m + 4 k^{2} v + \lambda_{2} m - m r}{4 k v \left(k^{2} + \lambda_{2} - r\right)},\\
    \epsilon^\pm &= \pm \frac{\sqrt{k}\sqrt{m}}{\sqrt{2v}\sqrt{k+1}} + O\left(\frac{1}{v}\right).
\end{align*}

The expressions for $t_\text{opt}$ and $p_\text{succ}$ still derive from the formulas provided by~\cite{LPST23}. As the calculations in the cited paper primarily focus on the two marked cases, and we have already used its final formulas in the previous subsection, we will now elaborate on these expressions in more detail. We start by observing that~\cite{LPST23} provides us with an expression for the probability of locating a marked vertex as a function of time, following certain assumptions (which we have also considered). The expression is as follows
\begin{equation}
    p(t) = 4 |\braket{\lambda|\psi(0)}|^2 \sum_{w\in W} |\braket{w|\lambda}|^2\sin^2 \epsilon t+o(1) + o(\epsilon t),
\end{equation}
which implies that $t_\text{opt} = \frac{\pi}{2\epsilon}$, giving us 
\begin{equation}
    t_\text{opt} = \frac{\pi\sqrt{k+1}}{\sqrt{2k}\sqrt{m}}\sqrt{v} + O\left(\frac{1}{\sqrt{v}}\right). \label{eq:toptMMSS}
\end{equation}
The success probability is
\begin{equation}
    p_\text{succ} = 4 |\braket{\lambda|\psi(0)}|^2 \sum_{w\in W} |\braket{w|\lambda}|^2. \label{eq:pSucc}
\end{equation}
First, we need to find each $|\braket{w|\lambda}|$, taking note that the definition of $\Lambda_\lambda$ in~\cite{LPST23} establishes that the vector with $|\braket{w|\lambda}|$ as its coordinates is a 0-eigenvector. Based on our choice of $\epsilon$, we have $u = (1,1,...,1)$ as a 0-eigenvector. This is a multiple of the desired vector, and it already demonstrates that $|\braket{w|\lambda}| = |\braket{w'|\lambda}|$ for every pair $(w,w')$. We find a correction factor $c$, which is provided in~\cite{LPST23} and it follows that:
\begin{equation*}
    \braket{w|\lambda} = c = \frac{k}{\sqrt{2}\sqrt{m}\sqrt{k^2 - \lambda_2 + r}} + \left(\frac{1}{\sqrt{v}}\right).
\end{equation*}
Knowing this term, we can also calculate
\begin{align*}
    \braket{\psi(0)|\lambda} &= - \frac{1}{\lambda + \gamma\phi_0}\sum_{w\in W} \braket{w|\lambda}\braket{\psi_0|w}\\
    &=-\frac{\sqrt{2}}{2} + o(1).
\end{align*}
Substituting all the calculated terms in Equation~\eqref{eq:pSucc} finally yields
\begin{equation}
    p_\text{succ} = \frac{k}{k+1}+ O\left(\frac{1}{\sqrt{v}}\right).
\end{equation}

\subsubsection*{More general cases}
Another potential generalization is the scenario where the subgraph induced by the marked vertices is regular with parts of equal size. Although calculating the determinant remains a challenging task in this case, we can leverage the fact that the sum of all rows is the same. Consequently, we can work with the eigenvector consisting entirely of ones and find the $\lambda$ value that results in a zero eigenvalue.

Let $m$ represent the number of marked vertices in each part of the bipartite graph, and let $d$ denote the number of marked vertices each marked vertex is connected to. Additionally, let us define $a$ as $(\Lambda_\lambda)_{ww'}$ when $w$ and $w'$ are adjacent, $b$ as the expression when they are in different parts and non-adjacent, and $c$ as the expression for vertices in the same part. Then, the sum of each row in $\Lambda_\lambda$ will be equal to
\begin{equation*}
    (\Lambda_\lambda)_{ww} + d a + (m-d) b + (m-1) c.
\end{equation*}
It is the eigenvalue associated with vector $(1,1,...,1)$. We wish to reduce this eigenvalue to zero. We employ the same strategy of multiplying the expression by $\epsilon$, identifying the $\gamma$ that cancels out the linear term, and then calculating the values of $\epsilon^\pm$. We obtain
\begin{align*}
    \gamma&=\frac{k\left(km - dv + (m-v)\sqrt{k-\lambda_2}\right)}{v\sqrt{k-\lambda_2}\,(k-\lambda_2 - k^2)},\\
    \epsilon^\pm &= \pm  \frac{\sqrt{m(k+d\sqrt{k})}}{\sqrt{k + 1}} \frac{1}{\sqrt{v}} + O\left(\frac{1}{v}\right).
\end{align*}
From this, we can directly derive
\begin{equation}
t_\text{opt} = \displaystyle \frac{\pi\sqrt{k+1}}{2\sqrt{m(k+d\sqrt{k})}}\sqrt{v} + O\left(\frac{1}{v}\right).\label{eq:toptMD}
\end{equation}
In the preceding equation, it's worth noting that we can revert to the expression for two marked vertices in separate parts ($m = 1$) in both the adjacent ($d=1$) and non-adjacent ($d=0$) cases. The scenario where the two marked vertices are within the same part constitutes a subcase of the discussion from the prior subsection.

To calculate the success probability, we follow the method used in the previous subsection calculating $\braket{\lambda|w}$ and $\braket{\psi(0)|\lambda}$ to obtain
\[p_\text{succ} = \frac { 4\,\sqrt {k} \left( k+1
 \right)\left( d+\sqrt {k} \right)^{3} }{ \left( (k+1)(d+2\sqrt{k}) + 2d\sqrt{k}
 \right)^{2}} + O\left(\frac{1}{v}\right).
\]

In every case we analyzed, it's evident that the primary factor affecting the success probability is independent of both the number of vertices and the number of marked vertices. In every instance, when we enhance the degree $k$ of the graph and the number of nodes $v$, the probability approaches 1. It's also crucial to highlight that in the final scenario, where the marked vertices are connected, the success probability rises in proportion to the degree of connectivity among the marked vertices. Here, $d$ represents the degree of the induced subgraph.

Given that the success probability is $o(1)$, our main concern becomes determining the optimal number of steps to infer the time complexity of the search algorithm. While the optimal time is $O(\sqrt{v})$, Equations~\eqref{eq:toptMMSS} and~\eqref{eq:toptMD} elucidate how the number of marked vertices $m$ and their connectivity (expressed by $d$) can reduce the algorithm's runtime.

A notable point concerning $t_\text{opt}$ in Equation~\eqref{eq:toptMD}, particularly in cases where each part of the bipartite graph features $m$ marked vertices, emerges as follows: When $k > m$, we have the option to choose a set of marked vertices inducing a complete bipartite subgraph, where $d = m$. This implies that each marked vertex connects to the marked vertices in the opposing part. Under these conditions, $t_\text{opt} = O\left(\frac{\sqrt{v}}{m}\right)$, assuming $m$ stays below a certain fixed $k$. Such behavior is not common in quantum searches that involve multiple marked vertices.

\section{Final remarks}\label{sec:conc}

In this work, we investigated quantum search algorithms on bipartite graphs using continuous-time quantum walks, with a particular focus on the significance of combinatorial $t$-designs. We determined the eigenvalues and eigenvectors of symmetric $t$-designs. By carefully analyzing these designs and their incidence matrices, we identified a subset with symmetries that are particularly useful for determining the time complexity in scenarios with multiple marked vertices. Our findings highlight the effectiveness of the continuous-time quantum walk on certain symmetric $t$-designs. We show that these designs can achieve a running time of $O(\sqrt{n})$, where $n$ is the number of vertices in the corresponding bipartite graph, regardless of the number of marked vertices. Additionally, the success probability remains consistently $o(1)$, but it converges to unity in certain cases. These results are among the first to study multiple marked elements on complex geometric structures where the time complexity of spatial search algorithms can be analytically derived using the continuous-time model.

This work deepens our understanding of the relationship between quantum walks and combinatorial graph theory, especially within the context of symmetrical $t$-designs. Additionally, it aids in paving the path for analytically determining the time complexity of quantum-walk-based search algorithms on bipartite graphs with multiple marked vertices, a notably challenging task in the continuous-time scenario. Anticipated future research offers the prospect of more comprehensive results, potentially broadening the relevance of non-symmetric $t$-designs in quantum search algorithms.

\section*{Conflict of interest}
The authors declare that there is no conflict of interest.

\section*{Data availability}
All data generated or analyzed during this study are included in this published article.

\section*{Acknowledgements}
We are indebted to Chris Godsil and Qiuting Chen for their insightful discussions, which significantly contributed to the formulation of Theorem 1~\cite{phdthesis}.
The work of P. Lugão was supported by CNPq grant number 140897/2020-8, and FAPERJ grant number E-26/202.351/2022. The work of R. Portugal was supported by FAPERJ grant number CNE E-26/200.954/2022, and CNPq grant numbers 308923/2019-7 and 409552/2022-4. The authors have no competing interests to declare that are relevant to the content of this article.

\bibliographystyle{unsrt}


\end{document}